\newcommand{\R}{{I \!\! R}}
\newtheorem{prop}{Proposition}
\newtheorem{defn}{Definition}
\begin{document}
\title{Some properties of a production function}
\author{C. Chilarescu}
\date{}
\maketitle
\centerline{\it University of Lille, France and West University of Timisoara, Romania}
\centerline{\it E-mail: Constantin.Chilarescu@univ-lille.fr}
\maketitle
\begin{abstract}
\noindent We examine the new production function developed by Chilarescu, and prove that under certain restrictions, the values of the elasticity can also be less than one. We will also prove that under certain restrictions on the parameters, the production function satisfies the Inada conditions.
\end{abstract}
\date{}
\maketitle
\section{Introduction}
A few years ago, the existence of a new production function with variable elasticity of substitution was proven by Chilarescu $(2021)$. The main aim of that paper was to develop a new production function, not only with variable elasticity of substitution, but also whose values are greater than one. In fact, the author tried to find a new production function that conforms to the theory developed by Piketty and Saez in an article published in  $(2014)$. Piketty and Saez analyzed the effect of the average annual real rate of return $(r)$ and the capital income ratio $\beta$, on the share of capital income in the national income, defined as $\alpha=r\beta$. In the standard economic model with perfectly competitive markets, $r$ is equal to the marginal product of capital. As the volume of capital $\beta$ rises, the marginal product $r$ tends to decline. If the standard hypothesis in economics are assumed, that is a unitary elasticity, then the fall in $r$, exactly offsets the rise in $\beta$, so that the capital share  is a technological constant. In recent decades, rich countries have experienced both a rise in $\beta$ and a rise in $\alpha$, which suggests that $\sigma$ is somewhat larger than one. This hypothesis has already been suggested by Sato and Hoffman $(1968)$.

But, looking more deeply at the new production function developed by Chilarescu, we observe that under certain restrictions, the values of the elasticity can also be less than one. We will analyze later in the present paper, in more details this new production function and we will clarify some essential of its properties. We will prove that under certain restrictions on the parameters, the production function satisfies the Inada conditions.

The first paper that tried to examine the implications of the Inada conditions on the asymptotic behaviour of the production functions was the paper of Barelli and Pess\^{o}a $(2003)$. Barelli and Pess\^{o}a proved that Inada conditions imply asymptotic Cobb-Douglas behavior of the production function. This result was was corrected few years later by Litina and Palivos $(2008)$ by putting that only the elasticity of substitution is equal to one. In two recent papers, de la Fonteijne $(2015)$ and Ozkaya $(2021)$ state that they have proven that the Inada conditions imply not only that the elasticity of substitution is bounded but also that the behavior of the production function considered is asymptotic Cobb-Douglas. More than that, Ozkaya proved that the parameters of the asymptotic Cobb-Douglas production function are endogenously determined by the original production function.

The paper is organized as follows. The first section is this introduction. In the next section we present our main result. We prove that under relatively reasonable restrictions the production function satisfies the Inada conditions. In the final section we present some asymptotical properties, numerical simulations and  conclusions.
\section{Inada conditions}
In this section we examine in more details the properties of the production function
\begin{equation}\label{eqcpf}
y=f(k)=A k^\theta\left[\alpha k^{\psi}+\beta\right]^{\omega},\; A > 0, \alpha > 0,\;\beta > 0.
\end{equation}
For this production function, the elasticity of substitution is given by the following equation
\begin{equation}\label{eqes}
\sigma(k)=\frac{\left[\alpha\left(\theta+\omega\psi\right)k^{\psi}
+\beta\theta\right]\left[\alpha\left(1-\theta-\omega\psi\right)k^{\psi}+\beta\left(1-\theta\right)\right]}
{\left[\alpha\left(\theta+\omega\psi\right)k^{\psi}
+\beta\theta\right]\left[\alpha\left(1-\theta-\omega\psi\right)k^{\psi}
+\beta\left(1-\theta\right)\right]-\alpha\beta\omega\psi^2k^\psi}.
\end{equation}
As we can observe, the fact that its values are superiors or inferiors to one depends also on the sign of the parameter $\omega$.
Now we can highlight the properties of this production function. To do this we need the following definition.
\begin{defn}
A production function $f : \R_+ \rightarrow \R_+$, expressed in intensive form, satisfies the Inada conditions if the following restrictions hold:
\begin{enumerate}
  \item $f(k) \geq 0,$ for all $k \geq 0$,\;$f(0) = 0$,\;\mbox{and}\;$\lim\limits_{k\rightarrow +\infty}f(k) = +\infty$,
  \item $f^{\prime}(k) > 0,$ for all $k > 0$,\;$\lim\limits_{k\rightarrow 0}f^{\prime}(k) = +\infty$,\;\mbox{and}\;$\lim\limits_{k\rightarrow +\infty}f^{\prime}(k) = 0$,
  \item $ f^{\prime\prime}(k) < 0$, for all $k > 0$.
\end{enumerate}
\end{defn}
Our main contribution is presented in the following proposition.
\begin{prop}
If the parameters of the production function defined by equation \eqref{eqcpf} lie in the following set:
\begin{equation}\label{setrestr}
\Phi \equiv \left\{\varphi\in\Phi | A >0, \alpha > 0, \beta > 0, \theta\in (0, 1),0 < \theta + \omega\psi <1, \psi < 1,0 < \omega\psi < 1\right\}
\end{equation}
where $\omega$ and $\psi$ have the same sign, then it satisfies the Inada conditions.
\end{prop}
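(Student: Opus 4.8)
The plan is to verify the three defining conditions in turn, reducing each to an elementary statement about the auxiliary quantity $u(k):=\alpha k^{\psi}/(\alpha k^{\psi}+\beta)\in(0,1)$, which sweeps monotonically from $0$ to $1$ (if $\psi>0$) or from $1$ to $0$ (if $\psi<0$) as $k$ ranges over $(0,+\infty)$.

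\emph{Conditions 1 and 2.} Non-negativity follows from $A,\alpha,\beta>0$. For the limits (and for the two limit statements about $f'$) I would use the elementary asymptotics: as $k\to0^{+}$, $f(k)\sim A\beta^{\omega}k^{\theta}$ when $\psi>0$ and $f(k)\sim A\alpha^{\omega}k^{\theta+\omega\psi}$ when $\psi<0$, with the two expansions swapping roles as $k\to+\infty$. Since the exponents $\theta$ and $\theta+\omega\psi$ both lie in $(0,1)$, this gives $f(0^{+})=0$ and $f(+\infty)=+\infty$ at once. Logarithmic differentiation gives $f'(k)=\dfrac{f(k)}{k}\,h(k)$ with $h(k):=\theta+\omega\psi\,u(k)$; since $\omega\psi>0$ and $u(k)\in(0,1)$ we have $h(k)\in(\theta,\theta+\omega\psi)$, a compact subinterval of $(0,1)$, whence $f'(k)>0$. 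The one-sided limits of $f'$ then follow because the prefactor $f(k)/k$ behaves like $k^{\theta-1}$ or $k^{\theta+\omega\psi-1}$—hence (using $\theta<1$ and $\theta+\omega\psi<1$) tends to $+\infty$ as $k\to0^{+}$ and to $0$ as $k\to+\infty$—while $h$ remains bounded and bounded away from $0$.

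\emph{Condition 3} is the crux. A second differentiation gives
\begin{equation*}
f''(k)=\frac{f(k)}{k^{2}}\,Q\!\left(u(k)\right),\qquad Q(u):=(\theta+\omega\psi u)(\theta+\omega\psi u-1)+\omega\psi^{2}u(1-u),
\end{equation*}
a quadratic in $u$ with $Q(0)=\theta(\theta-1)<0$, $Q(1)=(\theta+\omega\psi)(\theta+\omega\psi-1)<0$, and leading coefficient $\omega\psi^{2}(\omega-1)$. Since $f(k)/k^{2}>0$ it suffices to prove $Q<0$ on $(0,1)$. If $\omega<0$ or $\omega\ge1$ the leading coefficient is non-negative, so $Q$ is convex (or affine) and attains its maximum over $[0,1]$ at an endpoint, hence $Q<0$ there. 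The one remaining case is $0<\omega<1$, which forces $\psi>0$ because $\omega\psi>0$; then $Q$ is strictly concave, so I only need to control its value at the vertex $u^{*}=\dfrac{2\theta-1+\psi}{2\psi(1-\omega)}$, and only when $u^{*}\in(0,1)$. Writing $q:=2\theta-1+\psi$, the requirement $u^{*}\in(0,1)$ forces $0<q<2\psi(1-\omega)$, and a short computation reduces $Q(u^{*})<0$ to $\omega q^{2}<4(1-\omega)\theta(1-\theta)$, which I would obtain from the chain
\begin{equation*}
\omega q^{2}<\omega q\cdot2\psi(1-\omega)=2(1-\omega)(\omega\psi)q<2(1-\omega)(1-\theta)q<4(1-\omega)\theta(1-\theta),
\end{equation*}
applying in turn $q<2\psi(1-\omega)$, then $\omega\psi<1-\theta$, then $q<2\theta$ (which is exactly $\psi<1$).

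The genuinely delicate step is the concave sub-case of Condition 3; everything else is routine asymptotic bookkeeping and sign-tracking. It is worth remarking that it is precisely in that last chain of inequalities that all the hypotheses cutting out $\Phi$—$\theta\in(0,1)$, $0<\theta+\omega\psi<1$, $\psi<1$, $0<\omega\psi<1$, and $\omega,\psi$ of a common sign—are used simultaneously. Before assembling the pieces I would re-derive the second-derivative identity on a trivial instance (say $\omega=1$, where $f''$ is computable by hand) to be sure no factor has been dropped, and I would keep $Q(0)$ and $Q(1)$ in the factored forms above so that their negativity is manifest.
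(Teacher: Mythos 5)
Your proposal is correct, and its verification of conditions 1--2 (logarithmic differentiation giving $f'(k)=\frac{f(k)}{k}\left[\theta+\omega\psi\,u(k)\right]$ with $u(k)=\alpha k^{\psi}/(\alpha k^{\psi}+\beta)$, plus the case-by-case asymptotics in the sign of $\psi$) mirrors the paper's treatment. Where you genuinely diverge is the concavity step. The paper writes $f''(k)=-\frac{f(k)g(k)}{k^{2}(\alpha k^{\psi}+\beta)^{2}}$ with
$g(k)=\alpha^{2}(\theta+\omega\psi)(1-\theta-\omega\psi)k^{2\psi}+\alpha\beta\left[\omega\psi(1-\psi)+2\theta(1-\theta-\omega\psi)\right]k^{\psi}+\beta^{2}\theta(1-\theta)$,
and the whole point of the restrictions $0<\theta+\omega\psi<1$, $\omega\psi>0$, $\psi<1$, $\theta\in(0,1)$ is that each of the three coefficients is manifestly positive, so $g>0$ and $f''<0$ with no further analysis. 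Your $Q(u)$ is exactly $-g(k)/(\alpha k^{\psi}+\beta)^{2}$, so you are proving the same inequality, but by a harder route: endpoint evaluation plus a convex/concave case split on the leading coefficient $\omega\psi^{2}(\omega-1)$ and, in the concave case $0<\omega<1$, a vertex computation with the chain $\omega q^{2}<2(1-\omega)(\omega\psi)q<2(1-\omega)(1-\theta)q<4(1-\omega)\theta(1-\theta)$. I checked that identity and that chain; they are correct, and every strict inequality you invoke is available from $\Phi$. What your route buys is transparency about where each hypothesis enters and robustness (it would still work if the middle coefficient of $g$ were not sign-definite); what it costs is the delicate vertex case, which the paper's regrouping of $-Q(u)$ over the common denominator $(\alpha k^{\psi}+\beta)^{2}$ --- equivalently, expanding in the basis $(1-u)^{2}$, $u(1-u)$, $u^{2}$ --- renders unnecessary, since under $\Phi$ all the resulting coefficients are positive at sight. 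A final cosmetic point: the interval $(\theta,\theta+\omega\psi)$ you call compact is open; what you actually use is that $h$ is bounded away from $0$ and bounded above by $\theta+\omega\psi<1$, which is true and suffices.
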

\begin{proof}
The first and the second derivatives have the following formulas:
\begin{equation}\label{eqcfdpf}
f^{\prime}(k)=\frac{f(k)\left[\alpha\left(\theta+\omega\psi\right) k^\psi+\beta\theta\right]}{k\left(\alpha k^\psi+\beta\right)},
\end{equation}
\begin{equation}\label{eqcsdpf}
f^{\prime\prime}(k)=-\frac{f(k)g(k)}{{k}^{2}\left(\alpha{k}^{\psi}+\beta\right)^{2}},
\end{equation}
where
$$g(k)=\alpha^{2}\left(\theta+\omega\psi\right)\left(1-\theta-\omega
\psi\right)k^{2\psi}+\alpha\beta
 \left[\psi\omega\left(1-\psi\right)+2\theta\left(1-\theta-\omega\psi\right)\right]k^\psi+\beta^
2\theta\left(1-\theta\right).$$
Examining the equations \eqref{eqcpf}, \eqref{eqcfdpf} and \eqref{eqcsdpf}, we deduce that the parameters of the production function must respect the following restrictions:
\begin{enumerate}
  \item $A > 0$, $\alpha > 0$, $\beta > 0$;
  \item $1 > \theta+\omega\psi > 0$, $\beta\theta > 0$;
  \item $\omega\psi(1-\psi) > 0$, $\theta(1-\theta) > 0$.
\end{enumerate}
Consequently, from these restrictions we can claim that the parameters lie in the set \eqref{setrestr}.
From the definition of the set $\Phi$, we deduce that $\omega$ and $\psi$ will be both positive or both negative. It is just a simply exercise to prove that $\lim\limits_{k\rightarrow 0}f(k) = 0.$
Let us now examine the following limit.
$$\lim\limits_{k\rightarrow 0}f^{\prime}(k) =\lim\limits_{k\rightarrow 0}\frac{f(k)\left[\alpha\left(\theta+\omega\psi\right) k^\psi+\beta\theta\right]}{k\left(\alpha k^\psi+\beta\right)} = \lim\limits_{k\rightarrow 0}\frac{f(k)}{k}\lim\limits_{k\rightarrow 0}\frac{\alpha\left(\theta+\omega\psi\right) k^\psi+\beta\theta}{\alpha k^\psi+\beta}= I_1 I_2.$$
The second limit depends on of the value of the parameter $\psi$. Thus, if $\psi > 0$, then the value of the limit is equal to $\theta$ and if $\psi < 0$, then the value of the limit is equal to $\theta+\omega\psi$.
Let us now examine the first limit.
$$\lim\limits_{k\rightarrow 0}\frac{f(k)}{k}=\lim\limits_{k\rightarrow 0}\frac{A \left[\alpha k^{\psi}+\beta\right]^{\omega}}{k^{1-\theta}}.$$
As a consequence of the above results, the value of this limit is equal to $+\infty$ for any value of $\psi$ and therefore, we have $\lim\limits_{k\rightarrow 0}f^{\prime}(k) = +\infty$.
Finally, we need to evaluate the following limit:
$$\lim\limits_{k\rightarrow \infty}f^{\prime}(k)=\lim\limits_{k\rightarrow \infty}\frac{f(k)\left[\alpha\left(\theta+\omega\psi\right) k^\psi+\beta\theta\right]}{k\left(\alpha k^\psi+\beta\right)}=\lim\limits_{k\rightarrow \infty}\frac{A \left[\alpha k^{\psi}+\beta\right]^{\omega}}{k^{1-\theta}}\lim\limits_{k\rightarrow \infty}\frac{\alpha\left(\theta+\omega\psi\right) k^\psi+\beta\theta}{\alpha k^\psi+\beta}=I_1 I_2$$
As in the previous case, the second limit $I_2$ has a finite limit, equal to $\theta+\omega\psi$ if $\psi > 0$ or equal to $\theta$, if $\psi < 0$. For the first limit we have
$$I_1 = \lim\limits_{k\rightarrow \infty}\frac{A \left[\alpha k^{\psi}+\beta\right]^{\omega}}{k^{1-\theta}}=
\left\{\begin{array}{c}
         \lim\limits_{k\rightarrow \infty}\frac{A \left[\alpha +\beta k^{-\psi}\right]^{\omega}}{k^{1-\theta -\omega\psi}} = 0, \;\mbox{if}\; \psi > 0,\\\\
        \lim\limits_{k\rightarrow \infty}\frac{A \left[\alpha k^{\psi}+\beta\right]^{\omega}}{k^{1-\theta}}, \; = 0,\;\mbox{if}\; \psi < 0,
       \end{array}\right.$$
and therefore we have $\lim\limits_{k\rightarrow \infty}f^{\prime}(k) = 0$,
and thus we have proved that this production function satisfies the Inada conditions.
\end{proof}
\section{Asymptotic properties and numerical simulations}
In this section we will examine some asymptotic properties of this production function. As was proved by Ozkaya $(2021)$, if a production function expressed in intensive form satisfies the Inada conditions, then it converges asymptotically to a Cobb-Douglas production function.
Applying now the result obtained by Ozkaya $(2021)$ in Proposition $3$ we get:
$$\alpha_z = \lim\limits_{k\rightarrow 0}\frac{kf^{\prime}(k)}{f(k)}=\lim\limits_{k\rightarrow 0}\frac{kf(k)\left[\alpha\left(\theta+\omega\psi\right) k^\psi+\beta\theta\right]}{kf(k)\left(\alpha k^\psi+\beta\right)}=\lim\limits_{k\rightarrow 0}\frac{\alpha\left(\theta+\omega\psi\right) k^\psi+\beta\theta}{\alpha k^\psi+\beta}.$$
$$\beta_z = \lim\limits_{k\rightarrow +\infty}\frac{kf^{\prime}(k)}{f(k)}=\lim\limits_{k\rightarrow +\infty}\frac{kf(k)\left[\alpha\left(\theta+\omega\psi\right) k^\psi+\beta\theta\right]}{kf(k)\left(\alpha k^\psi+\beta\right)}=\lim\limits_{k\rightarrow +\infty}\frac{\alpha\left(\theta+\omega\psi\right) k^\psi+\beta\theta}{\alpha k^\psi+\beta}.$$
As proven above, these limits depend on the sign of $\psi$ and thus we obtain
\begin{equation}\label{eqabz}
\alpha_z = \left\{\begin{array}{c}
         \theta, \;\mbox{if}\; \psi > 0,\\\\
        \theta+\omega\psi,\;\mbox{if}\; \psi < 0.
       \end{array}\right.,\;\beta_z = \left\{\begin{array}{c}
         \theta+\omega\psi, \;\mbox{if}\; \psi > 0,\\\\
        \theta,\;\mbox{if}\; \psi < 0.
       \end{array}\right.
\end{equation}
\begin{equation}\label{eqABz0}
A_z = \left\{\begin{array}{c}
\lim\limits_{k\rightarrow 0}\frac{A k^\theta\left[\alpha k^{\psi}+\beta\right]^{\omega}}{k^\theta} = A\lim\limits_{k\rightarrow 0}\left[\alpha k^{\psi}+\beta\right]^{\omega} = A\beta^\omega,
 \;\mbox{if}\; \psi > 0,\\\\
 \lim\limits_{k\rightarrow 0}\frac{A k^\theta\left[\alpha k^{\psi}+\beta\right]^{\omega}}{k^{\theta+\omega\psi}} = A\lim\limits_{k\rightarrow 0}\frac{\left[\alpha k^{\psi}+\beta\right]^{\omega}}{k^{\omega\psi}} = A\alpha^\omega,
  \;\mbox{if}\; \psi < 0.
   \end{array}\right.
\end{equation}
\begin{equation}\label{eqABzin}
B_z = \left\{\begin{array}{c}
\lim\limits_{k\rightarrow +\infty}\frac{A k^\theta\left[\alpha k^{\psi}+\beta\right]^{\omega}}{k^{\theta+\omega\psi}} = A\lim\limits_{k\rightarrow +\infty}\frac{\left[\alpha k^{\psi}+\beta\right]^{\omega}}{k^{\omega\psi}} = A\alpha^\omega,
 \;\mbox{if}\; \psi > 0,\\\\
 \lim\limits_{k\rightarrow +\infty}\frac{A k^\theta\left[\alpha k^{\psi}+\beta\right]^{\omega}}{k^{\theta}} = A\lim\limits_{k\rightarrow +\infty}\left[\alpha k^{\psi}+\beta\right]^{\omega} = A\beta^\omega,
  \;\mbox{if}\; \psi < 0.
   \end{array}\right.
\end{equation}
Thus, the limit production function take the following form
\begin{equation}\label{eqpf0}
g\left(k,\;k \rightarrow 0\right) \left\{\begin{array}{c}g_1= A\beta^\omega k^\theta,
 \;\mbox{if}\; \psi > 0,\\\\
  g_2=A\alpha^\omega k^{\theta+\omega\psi},
  \;\mbox{if}\; \psi < 0,
   \end{array}\right.\;f\left(k,\;k \rightarrow +\infty\right) \left\{\begin{array}{c} f_1 =A\alpha^\omega k^{\theta+\omega\psi},
 \;\mbox{if}\; \psi > 0,\\\\
  f_2 =A\beta^\omega k^{\theta},
  \;\mbox{if}\; \psi < 0.
   \end{array}\right.
\end{equation}
The following numerical simulations are the consequence of the two possible cases generated by the two alternative values of the parameter $\omega$. The benchmark values for economy we consider are the following:

[$Case\; 1.$\;$\omega > 0$]:\; $\theta=0.8$, $\psi=0.9$, \;\;\;$\omega=0.2$, \;\;\;$\alpha=0.2$, $\beta=0.8$, $A=1.05$.

\vspace{0.2cm}

[$Case\; 2.$\;$\omega < 0$]:\;  $\theta=0.8$, $\psi=-0.9$, $\omega=-0.2$, $\alpha=0.2$, $\beta=0.8$, $A=1.05$.

The evolution of the elasticity of substitution and the transitional dynamics for the two production functions $f$ and $f_1$, of the first case are presented in the following three graphics.
\begin{center}
\includegraphics[width=6.5cm]{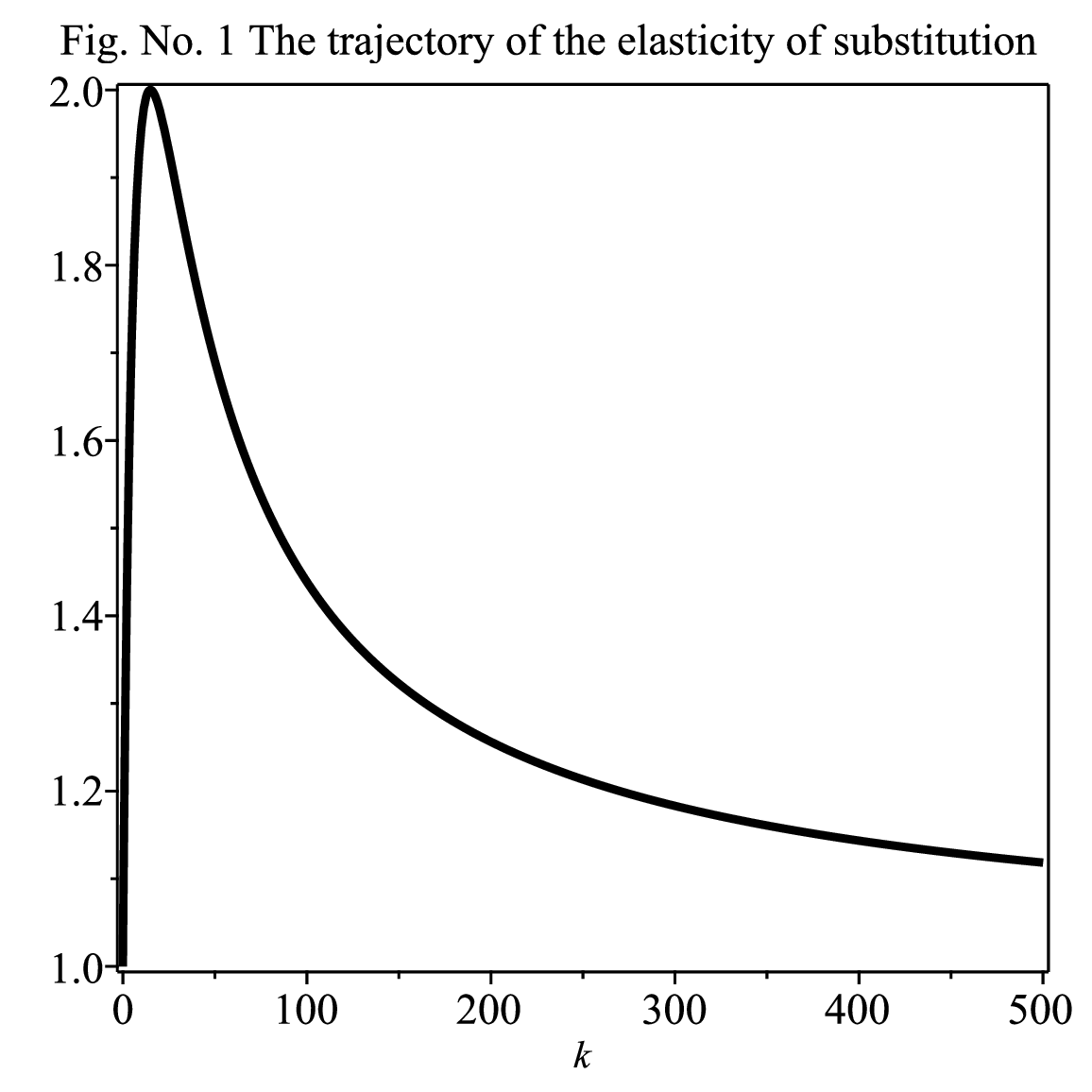}
\end{center}

\begin{center}
\includegraphics[width=6.5cm]{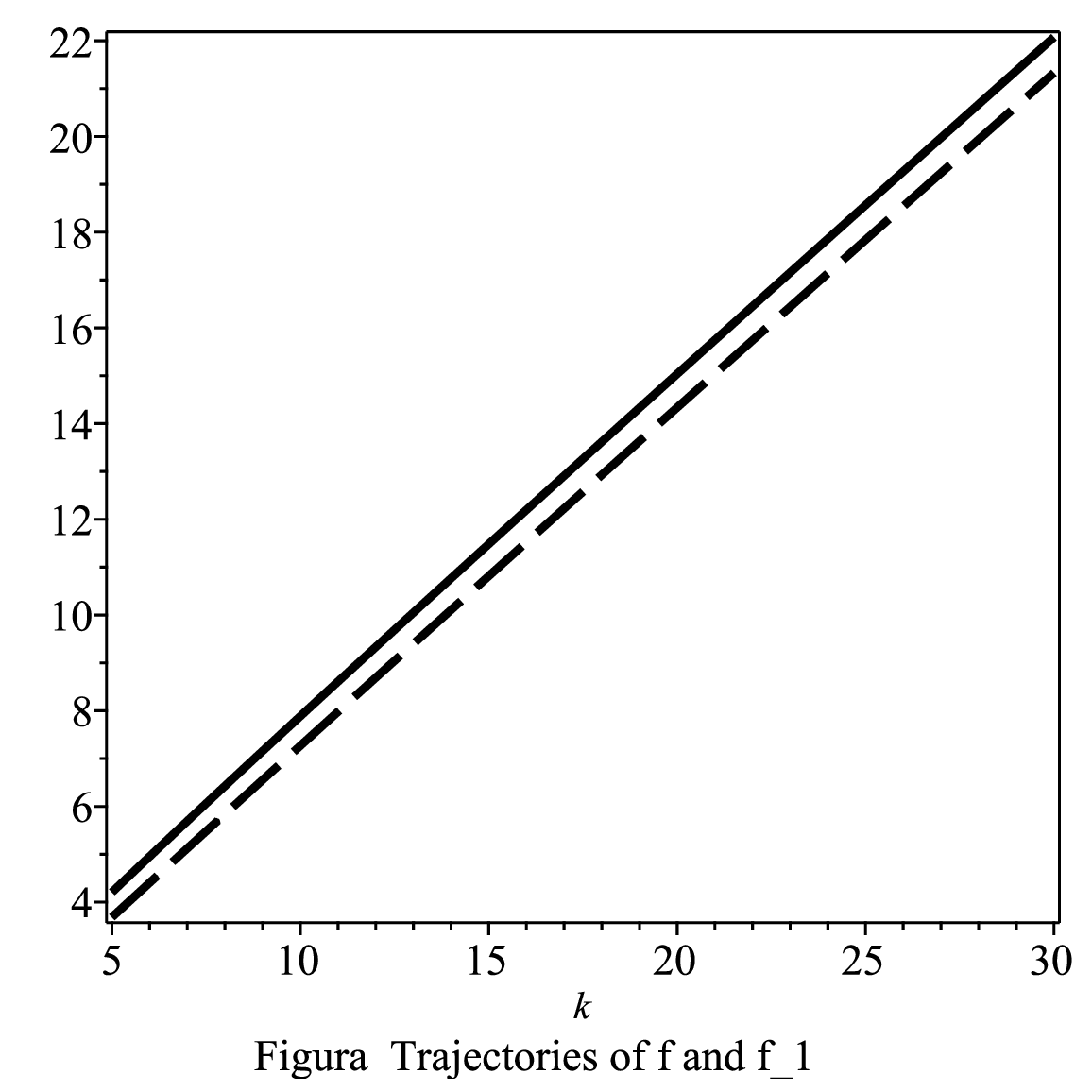}\;\;\;\includegraphics[width=6.5cm]{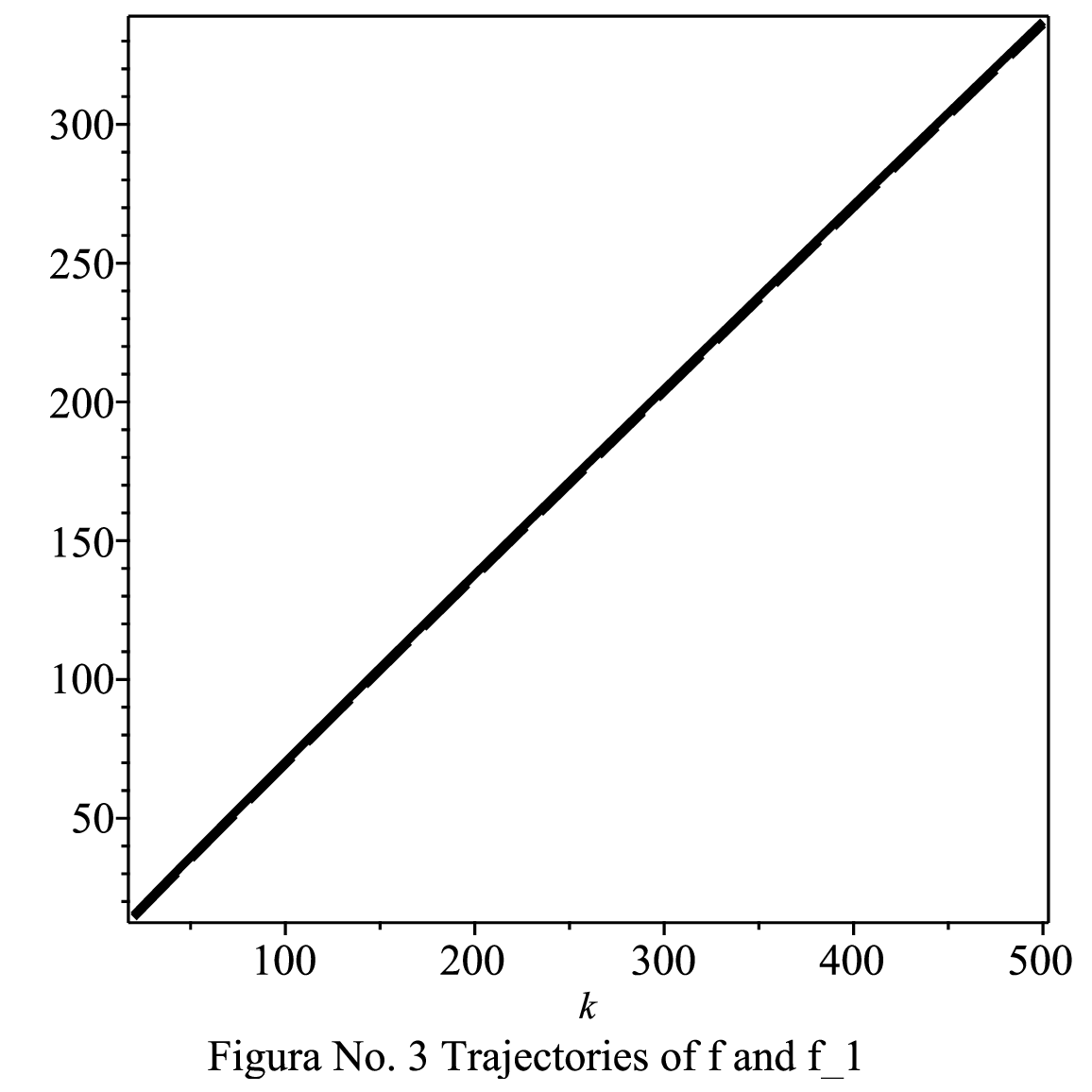}
\end{center}
As we can observe, there are significant differences between the two trajectories over the period whose elasticity of substitution is significantly greater than one, and the two trajectories almost coincides when $k$ tends to infinity.

The evolution of the elasticity of substitution and the transitional dynamics for the two production functions $f$ and $f_2$, of the second case are presented in the following three graphics.
\begin{center}
\includegraphics[width=6.5cm]{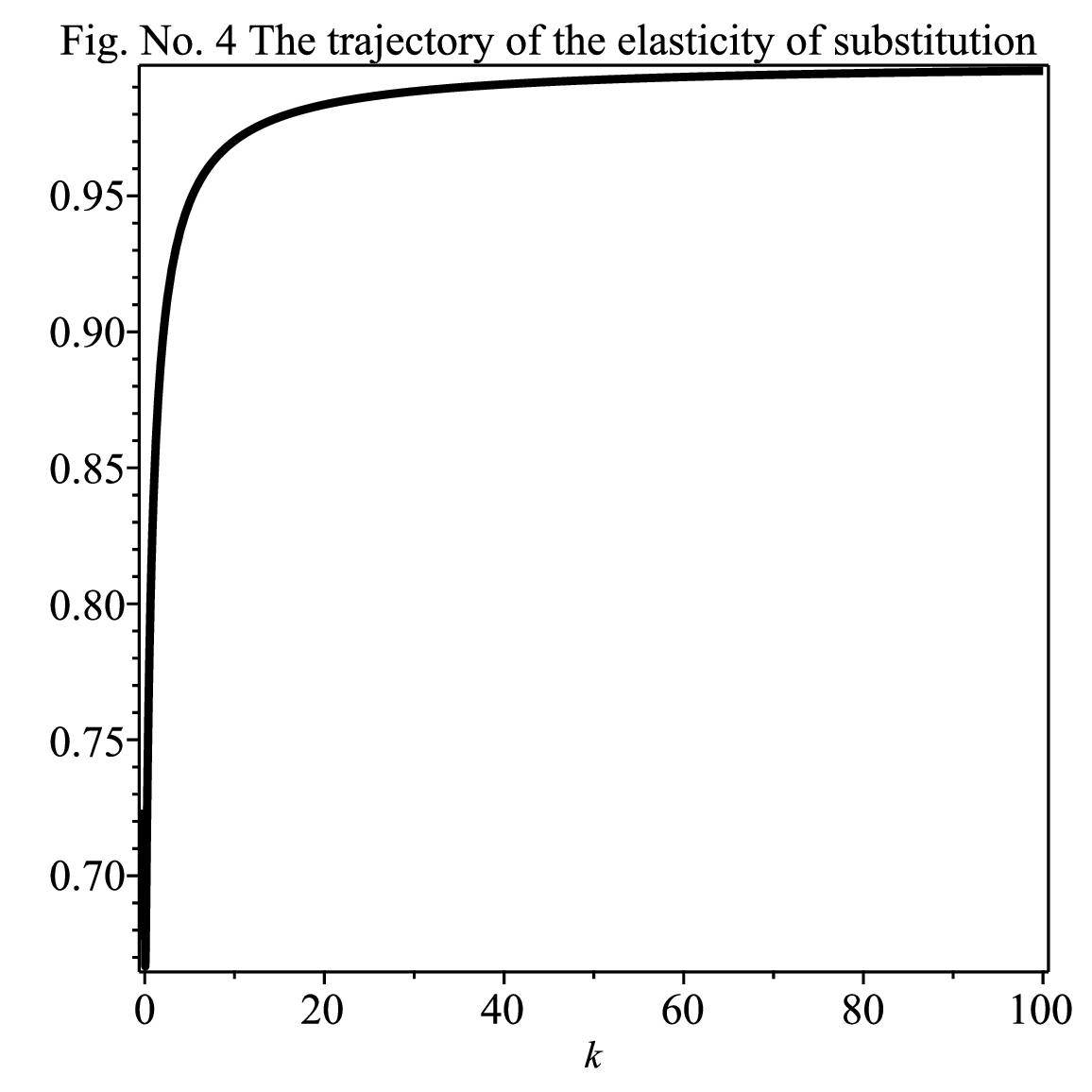}
\end{center}

\begin{center}
\includegraphics[width=6.5cm]{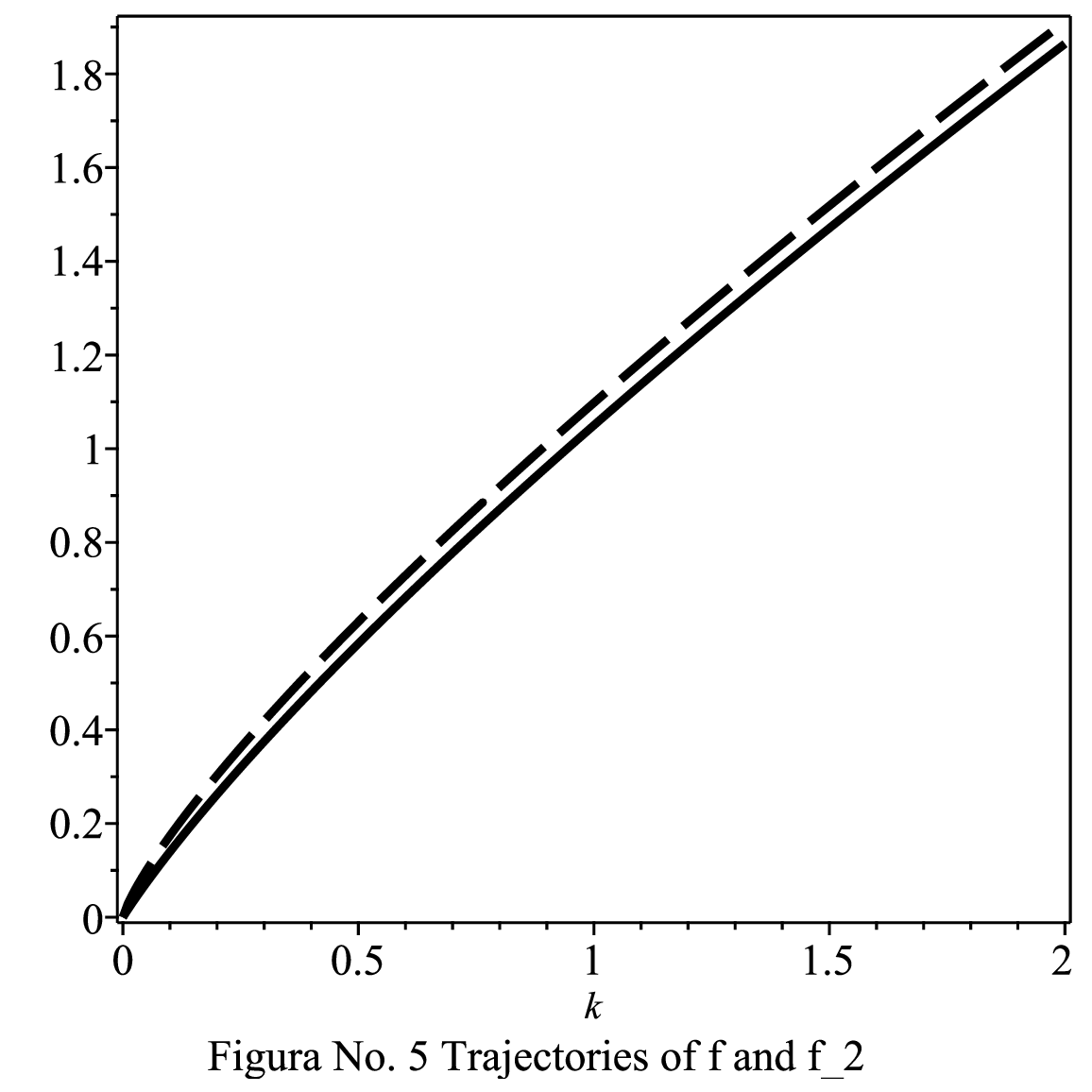}\;\;\;\includegraphics[width=6.5cm]{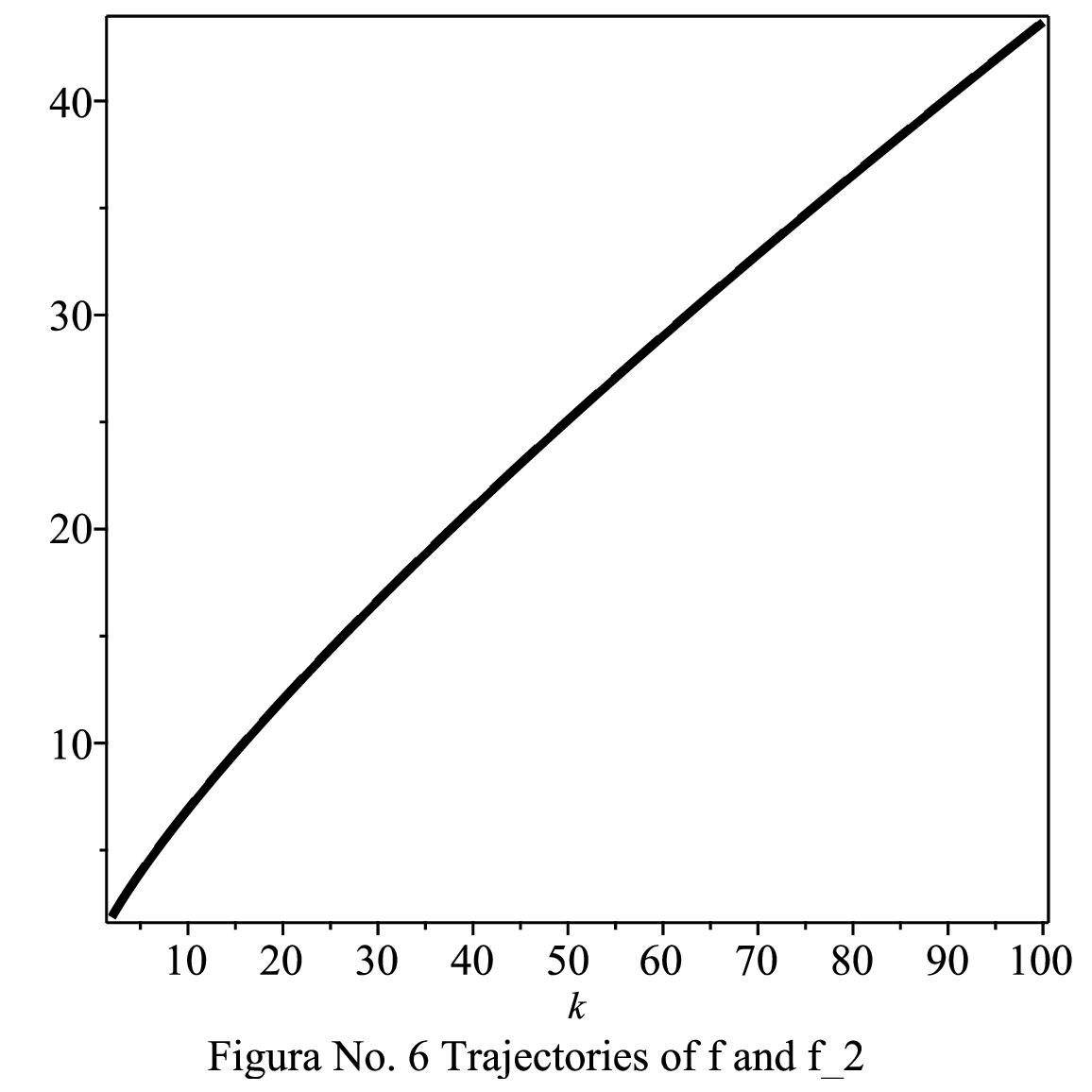}
\end{center}
As we can observe, there are significant differences between the two trajectories over the period whose elasticity of substitution is significantly less than one, and the two trajectories almost coincides when $k$ tends to infinity.


\begin{thebibliography}{99}
\bibitem{BP} Barelli P. and Pessoa S. A., $2003$.  Inada conditions imply that production function must be asymptotically Cobb–Douglas. Economics Letters, $81$, $361 - 363$.
\bibitem{CHI} Chilarescu C. $2021$.    A production function with variable elasticity of substitution greater than one. Economics Bulletin, $41$ $(3)$, $1741 - 1746$.
\bibitem{DF} de la Fonteijne, Marcel R., $2015$. Do Inada conditions imply Cobb–Douglas
asymptotic behavior or only a elasticity of substitution equal to one. MPRA,
Munich Personal RePEc Archive https://mpra.ub.uni-muenchen.de/82304/.
\bibitem{LP}  Litina A. and Palivos T.,  $2008$.   Do Inada conditions imply that production function must be asymptotically Cobb–Douglas? A comment. Economics Letters, $99$, $498 - 499$.
\bibitem{OZK} Ozkaya A. $2021$.   Inada conditions asymptotically transform production function into
the Cobb–Douglas. Economics Letters, $201$, $109786$.
\bibitem{PS} Piketty T. and Saez E. $2014$. Inequality in the Long Run.” Science $344$, $838 - 843$.
\bibitem{SH}  Sato R. and Hoffman R. F., $1968$. Production Functions with Variable Elasticity of Factor Substitution: Some Analysis and Testing. The Review of Economics and Statistics, $50$ $(4)$, $453 - 460$.
\end{thebibliography}
\end{document}